\theoremstyle{definition}
\newtheorem{definition}{Definition} 
\newtheorem{theorem}{Theorem} 
\begin{document}
\title{Double Link Failure Protection \\ using a Single P-cycle}

\author{Pallavi~Athe, Yatindra~Nath~Singh,~\IEEEmembership{Senior Member,~IEEE.}
\thanks{Pallavi Athe and Yatindra Nath Singh are with the Department of Electrical Engineering,
Indian Institute of Technology, Kanpur, India.}%
\thanks{E-mail:{\{apallavi, ynsingh}\}@iitk.ac.in}}

\maketitle

\begin{abstract}In this letter, we investigate survivability in optical networks for protection from two simultaneous link failures. Failure probability of two links with overlapping protection can be high if these links are geographically close. In a network with deterministic single link protection, simultaneous failure of two links may lead to partial or full loss of traffic on the failed links. Two link failure protection will make the network more resilient by protecting double failures having overlapping protection. A method for achieving double fault tolerance is double cycle method (DB); it uses two pre-configured cycles (p-cycles) to protect a link. Single p-cycle (SG) method, which uses one p-cycle to protect a link from two simultaneous link failure is introduced in this letter. Integer linear programs (ILP) are formulated for the	SG method as well as DB method. It has been observed that	the SG method  provides solution to bigger networks with lesser computational resources as compared to the DB method.
\end{abstract}
\begin{IEEEkeywords}
optical network, p-cycle, straddling link, two link failure.
\end{IEEEkeywords}
\IEEEpeerreviewmaketitle
\section{Introduction}
\IEEEPARstart{O}{ptical}  networks carry enormous amount of information. Failure of any element in an optical network even for a small duration can lead to a large amount of information loss, and consequently, the revenue loss. Thus it is important to build mechanisms of survivability to take care of failures. Survivability of a network is the ability of the network to either maintain uninterrupted flow of information or to minimize the outage period. In an optical network, it is of great importance and has been  studied extensively by the researchers \cite{Zhou} \cite{Habib}.\\
Among various protection schemes, p-cycles are quite promising due to their mesh like efficiency and ring like speed. P-cycles were first introduced by Grover \cite{Grover1} and has been extensively studied for optical network survivability \cite{Asthana}.  A p-cycle can protect working capacity of the on-cycle as well as the straddling links. A straddling link is a chord of a cycle with its end nodes being part of the  cycle. When a straddling link fails, a p-cycle has two paths to reroute the traffic and hence, the capacity requirement in the p-cycle is reduced to half of the protected working capacity in the straddling link. A single copy of a p-cycle can provide unit capacity protection to an on-cycle link and, two unit capacity protection to straddling links on it.\\
Feng et-al \cite{Feng}, described a method  which provides deterministic protection from two link failures using p-cycles. In \cite{Feng}, each link is protected using two p-cycles having link-disjoint protection segments. In this letter, we have described a method which use one p-cycle to protect working capacity on each link against two simultaneous link failures. This protection method is applicable to graphs which has at least three link disjoint paths between each set of nodes like for all other protection methods. Since, our method requires only one p-cycle for each link to protect against two simultaneous link failures, the number of variables involved and the computational time is significantly less. We have also observed that the required spare capacity is more efficient. The efficiency is  even better for optical network with higher average nodal degree.    
\section{SG method}
  Normally a copy of a p-cycle can protect unit capacity on an on-cycle link and two unit capacity on a straddling link. A straddling link has two alternative paths on the p-cycle for restoration, and this attribute of p-cycle is employed in the SG method to protect the optical network from two simultaneous link  failures. In the SG protection method, p-cycles are used to provide protection only to straddling links and no on-cycle protection is used. In case, a p-cycle is shared among multiple straddling links, the number of copies of that p-cycle is taken to be equal to the highest capacity straddling link. Number of copies of p-cycles to protect a straddling link is equal to its capacity. If the capacity of the straddling link to be protected is an odd number, then the number of copies of p-cycle is kept one unit capacity higher than the capacity of the straddling link. Even number of copies of p-cycle are required to ensure protection where the capacity of straddling links sharing the p-cycle are equal and an odd number. For example, if the capacity of every straddling link to be protected is three, then the number of copies of the  p-cycle should be four. Restoration takes place in two different ways depending on which two links have failed \emph{two straddling links}, or \emph{ a straddling and an on-cycle link}. When two straddling links fail simultaneously, the highest capacity straddling link uses half of the capacity of p-cycle, and the other failed link uses the remaining capacity of the p-cycle. When a straddling link and an on-cycle link fail, then the straddling link uses the intact alternative path on p-cycle for restoration. The working paths through the on-cycle links must have been restored as a straddling link by some other p-cycle.
 \begin{theorem}
   A p-cycle $p$ can protect all the straddling links on it from any two simultaneous link failure, if the number of copies of the p-cycle \\
    \[ n_{p}\geq
       \begin{cases}
         W & \quad \text{if $W$ is even},\\ 
        W+1 & \quad  \text{if $W$ is odd.}
       \end{cases}
     \]
  Here, $W$ is the working capacity of the straddling link with maximum capacity, on p-cycle $p$. 
 \end{theorem}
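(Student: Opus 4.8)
The plan is to argue by a case analysis on the two failed links, classified by how they relate to the cycle $p$, reducing everything to the elementary fact already recorded in the introduction: each copy of $p$ offers two link-disjoint restoration paths to any straddling link. Concretely, a straddling link with endpoints $A,B$ splits the cycle into two arcs, and with $n_p$ copies of $p$ in place up to $n_p$ units can be rerouted along each arc; hence a single straddling link can draw up to $2n_p$ units of protection when both arcs survive, and still $n_p$ units when only one arc is usable. I would state this quantitative observation first, since it is the engine of the whole argument.

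Next I would split into the binding cases. (i) \emph{Two straddling links of $p$ fail.} Here I would partition the $n_p$ copies into two disjoint groups and dedicate one group to each failed link; a link restored on $k$ dedicated copies obtains $2k$ units of protection, and because the groups are disjoint, no two restorations ever contend for the same physical channel on a cycle edge that both of them traverse. The worst case is two links both of capacity $W$, so each group must contain at least $\lceil W/2\rceil$ copies, giving $n_p \ge 2\lceil W/2\rceil$, which equals $W$ when $W$ is even and $W+1$ when $W$ is odd. (ii) \emph{A straddling link and an on-cycle link of $p$ fail.} The failed on-cycle link belongs to exactly one of the two arcs, rendering that arc unusable; the straddling link is then restored entirely on the intact arc, which carries $n_p \ge W$ units, enough since its capacity is at most $W$. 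The failed on-cycle link is itself a straddling link of another p-cycle and is protected there. (iii) \emph{No straddling link of $p$ fails}, i.e. two on-cycle failures, or at most one straddling link of $p$ together with an off-cycle failure: then either nothing on $p$ needs restoring, or a single straddling link is restored with the full $2n_p \ge W$ units available, so these cases are dominated by (i)--(ii).

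Combining the cases, the strictest requirement comes from (i), namely $n_p \ge 2\lceil W/2\rceil$, which is exactly the stated bound. I would close by observing that this also explains the parity clause: forcing $n_p$ to be even is precisely what permits the equal split of copies in case (i), and bumping an odd $W$ up to $W+1$ is the cheapest way to guarantee evenness.

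The step I expect to be the main obstacle is case (i), where one must be certain that the two simultaneous straddling restorations can coexist without capacity conflict on cycle edges that both might use. The disjoint-copy assignment is what makes this rigorous, since it decouples the two restorations completely, so the argument never has to reason about the (possibly intricate) way in which the two straddling chords partition the cycle. By contrast, case (ii) is routine once one notes that any on-cycle edge lies on exactly one arc, and case (iii) is immediate.
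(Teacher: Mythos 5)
Your proof is correct and follows essentially the same route as the paper's: split the copies of $p$ into two halves when two straddling links fail (yielding the $2\lceil W/2\rceil$ bound and the parity clause), restore a straddling link entirely on the intact arc when an on-cycle link also fails, and delegate the on-cycle link's own restoration to the p-cycle that protects it as a straddling link. Your version is in fact somewhat more rigorous than the paper's, which argues on a three-straddling-link example and omits your explicit disjoint-copy accounting and the trivial case where no straddling link of $p$ fails.
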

 \begin{proof}
   Consider a p-cycle protecting three straddling links $e1$, $e2$ and $e3$ as shown in Fig. \ref{fig:p1 1}. Assume that $e1$ is the highest capacity straddling link with $w1$ capacity protected by the p-cycle. The number of copy of p-cycle required as per our algorithm to protect network from two link failure will be $w1$ when $w1$ is even, and $w1+1$ when $w1$ is odd. In case of failure of the two straddling links $e1$ and $e2$ as shown in Fig. \ref{fig:p1 1} (a), half of the $w1$(or $w1+1$ when $w1$ is odd) copies of p-cycle will be used to restore $e1$. The link $e2$ is restored using remaining half copies of the p-cycle.\\
    In case of the failure of a straddling link and an on-cycle link as shown in Fig. \ref{fig:p1 1}(b), $w1$ capacity of the intact alternative path on the p-cycle will restore the straddling link $e1$. Link $f$ is restored by some other p-cycle of the network as it must have also been given double fault protection as a straddling link.
 \end{proof}
  \begin{figure}
     \includegraphics[width=6cm]{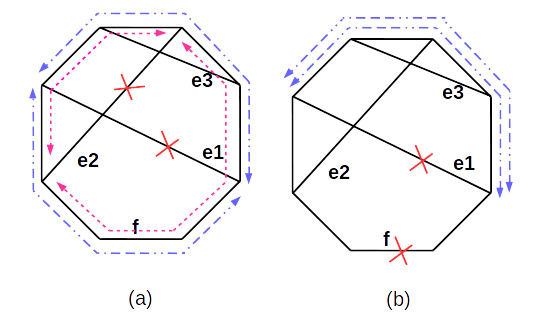}
     \centering
     \caption{ Illustration of two link protection by p-cycle in the SG method. (a) Restoration when two straddling link on p-cycle fail. (b) Restoration when straddling link and an on-cycle link on p-cycle fail.}
     \label{fig:p1 1}
  \end{figure} 
 \subsection{ILP for the SG method}       
 The objective of this ILP is to find the minimum spare capacity required to protect the optical network from two simultaneous link failure using the proposed protection method. We find the minimum spare capacity needed for single fault protection when only straddling link protection is used. Thereafter, number of p-cycles needed are simply doubled to provide double fault tolerance.\\
\textbf{Notations}\\
S=set of links. \\
P=set of cycles.\\
We use $x_{i,p}$ to denote the amount of demand capacity of link $i$ protected by unit capacity of p-cycle $p$ for single fault tolerance with only straddling link protection.\\
  \[ x_{i,p}=
    \begin{cases}
      2   & \quad \text{if link } i \text{ is a straddling link on p-cycle $p$,}\\
      0 & \quad \text{ otherwise.}
    \end{cases}
  \]
We also define, 
  \[ \delta_{i,j}=
  \begin{cases}
  1 & \quad \text {if link $i$ is on cycle $j$,} \\
  0 & \quad \text{otherwise.}
  \end{cases}
  \]
  \textbf{Variables}\\
  $s_{i}$ =spare capacity on link $i$.\\
  $w_{i}$ =working capacity on link $i$.\\
  $c_{i}$=cost of unit capacity of link $i$.\\
  $n_{p}$ = number of copies of p-cycle $p$.\\
  $n_{i,p}$=number of copies of p-cycle $p$ required to protect working capacity of link $i$ in single fault tolerance scenario. \\
  ILP for the SG method is as follows.\\
   \textbf{Minimize:} 
  \begin{equation}\label{eq:1}
  \sum_{i\in S} c_{i}s_{i}.
  \end{equation}
  \textbf{Subject to :}
  \begin{flalign}
 \sum_{{p}\in P} x_{i,p} n_{i,p}\geq w_{i};\quad \quad \forall i\in S \label{eq:2}.\\
          n_{p}\geq 2n_{i,p}; \quad \forall i \in S,  \forall p \in P \label{eq:3}.\\
    s_{i}\geq \sum_{j\in P}n_{j}\delta_{i,j} ;\quad \forall i\in S\label{eq:4}.\\
    n_{p}\geq 0, n_{i,p}\geq 0; \quad \forall i\in S;\forall p \in P\label{eq:13}.
    \end{flalign}
    In equation (\ref{eq:1}), the objective function that minimizes the total spare capacity is defined. Constraint (\ref{eq:2}) ensures that the  protection capacity on p-cycle is sufficient to protect the working capacity on link $i$ as a straddling link under single fault tolerance scenario. Constraint (\ref{eq:3}) selects the minimum number of copies of the p-cycles $p$, required to provide  protection to any two straddling links that fail simultaneously. Constraint (\ref{eq:4}) ensures that the spare capacity on each link is sufficient to form the p-cycles.
 \begin{figure}
    \includegraphics[width=\linewidth]{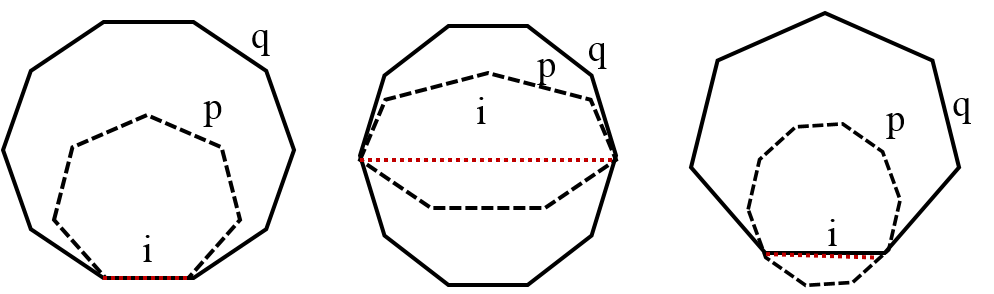}
    \caption{Various possibilities for protection-pair of p-cycles ($p$ and $q$) which can protect the link $i$ from two simultaneous link failure.}
      \label{fig:Picture6 1}
    \end{figure}
\section{DB method}
The most intuitive method for deterministic dual link failure protection is to protect each link of the network with two p-cycles as described in \cite{Feng}. The pair of p-cycles chosen to protect a link must have link disjoint backup paths as shown in Fig. \ref{fig:Picture6 1}. We consider the bidirectional graph G(N,L) and formulate an alternative ILP. This ILP uses a pair of p-cycles to protect each link. Here N is the number of nodes and L is the number of links. The set of cycles, and the set of pair of p-cycles for each link are precomputed and given as inputs to the optimization model.
\begin{definition}[Protection-Pair]
A protection-pair of link $i$, is defined as a pair of p-cycles which are link disjoint or have only one link $i$ as common on-cycle link.
\end{definition}
 \begin{figure}
    \includegraphics[width=6cm]{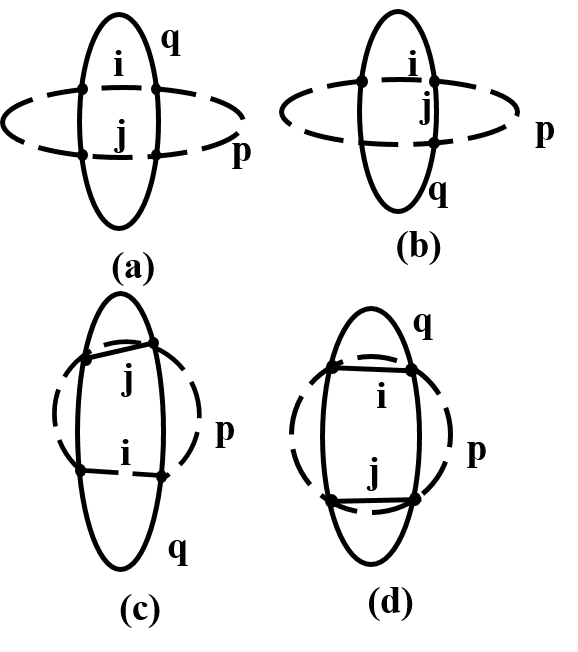}
    \centering
    \caption{Links sharing same pair of protecting p-cycles}
      \label{fig:Picture6 2}
 \end{figure}
The protection-pair for each link is chosen independently. The p-cycles are shared for protecting more than one link. Thus a p-cycle can pair with many other p-cycles simultaneously. In some cases the same pair of protecting p-cycles can protect more than one links. The simultaneous failure of two links having same protection-pair may lead to a situation when, the failed links have only one p-cycle for protection. The spare capacity required in such cases is higher. Possible scenarios when a pair of p-cycles can protect more than one link are shown in Fig. \ref{fig:Picture6 2}.\\
Fig. \ref{fig:Picture6 2} (a) shows the case in which two links $i$ and $j$ are `on-cycle' on p-cycle $p$ and `straddling' on p-cycle $q$. If link $i$ fails, the protection is provided by one of the p-cycles from the pair. If second failure occur on the p-cycle which is used to restore the first failure, then the traffic is diverted on to the unused p-cycle of the pair. Link $j$ is also protected by the pair of p-cycle $p$ and $q$ in the same way. In case of simultaneous failure of link $i$ and $j$, the p-cycle $p$ fails to protect the links, and traffic for both the links is restored by the p-cycle $q$. The required number of copies of p-cycles $p$ and $q$ should be sufficient to incorporate all the two failure scenarios mentioned above. The number of copies of p-cycle $p$ required will be the maximum of the working capacities of $i$ and $j$. The number of copies of p-cycle $q$ will be the sum of half the capacities of $i$ and $j$ approximated to the lowest integer greater than or equal to the sum.
\begin{equation}
n_{q}	=\left \lceil{\frac{w_{i}}{2}+\frac{w_{j}}{2}}\right \rceil
\end{equation}
To ensure protection for the three cases shown in Fig. \ref{fig:Picture6 2} (b), \ref{fig:Picture6 2}(c) and \ref{fig:Picture6 2}(d), number of copies of p-cycle $p$ and $q$ required will be the same as needed to protect the maximum of the capacities on link $i$ and $j$. In the case of simultaneous failure of links $i$ and $j$, the p-cycles $p$ and $q$ protect links $i$ and $j$ respectively.
   
\subsection{ILP for DB method}
This linear program finds the minimum spare capacity required to form p-cycles, ensuring the protection of each link from two simultaneous link failure using the DB method.\\ 
\textbf{Notations}\\
S=set of links. \\
P=set of cycles.\\
$Q_{i}$=set of protection-pair p-cycles of link $i$, indexed as $(p,q)_{i}$.\\
For a link $i$ having $(p,q)_{i}$ in the set $Q_{i}$, we define the indicator variable $x_{i,p,q}$ as follows,
\[ x_{i,p,q} =
 \begin{cases}
1   & \quad \text{if link } i \text{ is on cycle link on p-cycle $p$,} \\
2  & \quad \text{if link } i \text{ is straddling link on p-cycle $p$,} \\
0 & \quad \text{otherwise.}
 \end{cases}
\]
We also define the indicator variable $\delta_{i,j}$,
\[ \delta_{i,j}=
\begin{cases}
 1 & \quad \text {if link $i$ is on-cycle on p-cycle $j$,}\ \\
 0 & \quad \text{otherwise.}
\end{cases}
\]
 \textbf{Variables}\\
$s_{i}$ =spare capacity on link $i$.\\
$w_{i}$ =working capacity on link $i$.\\
$c_{i}$=cost of unit capacity of link $i$.\\
$n_{j}$ = number of copies of p-cycle $j$.\\
$n_{i,p,q}$=number of copies of p-cycle $p$ required to protect link $i$ in pair with p-cycle $q$ to protect link $i$ from two simultaneous failure.\\
$n_{i,p}$=number of copies of p-cycle $p$ required to protect working capacity of link $i$.\\
   \textbf{Minimize:}
 \begin{equation}\label{eq:5}
 \sum_{i\in S} c_{i}s_{i}.
 \end{equation}
 \textbf{Subject to :}
 \begin{flalign}\label{eq:6}
\sum_{(p,q)_{i}\in Q_{i}}[ x_{i,p,q} n_{i,p,q}+ x_{i,q,p} n_{i,q,p}]\geq2w_{i} ;\quad \quad \forall i\in S.
 \end{flalign}
 \text{If link $i$ is an on-cycle or a straddling link on both $p$ and $q$, then}
 \begin{flalign}\label{eq:7}
 n_{i,p,q}= n_{i,q,p}; \quad \forall i\in S; \forall (p,q) \in Q_{i}.
 \end{flalign}
 \text{If link $i$ is a straddling link on $q$ and on-cycle on $p$, then,}
  \begin{flalign}\label{eq:8}
  n_{i,p,q}= 2n_{i,q,p} ;\quad \forall i\in S; \forall (p,q) \in Q_{i}.
  \end{flalign}
 \begin{flalign}\label{eq:9}
   n_{i,p}= \sum_{q\in P, q\neq p} n_{i,p,q} ;\quad \forall p \in P,  \forall (p,q) \in Q_{i}.
 \end{flalign}
 \begin{flalign} \label{eq:10}
          n_{p}\geq n_{i,p}+n_{j,p,q} ;\quad \forall i \in S;  \forall (p,q) \in Q_{j} \cap Q_{i}; 
 \end{flalign}
 \text{\quad \quad  \quad \quad$  i,j \in q ; i \neq j$.}
 \begin{equation}\label{eq:11}
   s_{i}\geq \sum_{j\in P}n_{j}\delta_{i,j} ;\quad \forall i\in S.
 \end{equation} 
 \begin{flalign} \label{eq:12}
 \begin{split}
   n_{p}\geq 0,\quad n_{i,p}\geq 0,\quad n_{i,p,q}\geq 0;\\ \quad \forall i \in S ;\quad\forall p \in P;\quad \forall (p,q) \in Q_{i}.
   \end{split} 
 \end{flalign} 
The objective function (\ref{eq:5}) minimizes the total spare capacity. Constraint (\ref{eq:6}), (\ref{eq:7}) and (\ref{eq:8}) ensure full protection of the working capacity on each link from two simultaneous link failures. Constraint (\ref{eq:9}) calculates the number of copies of p-cycle $p$ required to protect a link $i$ as the sum of the number of copies of the p-cycle $p$ in pair with other p-cycles $q\in P;$ $q\neq p$, required to protect the link $i$. Constraint (\ref{eq:10}) ensures that the number of copies of p-cycle $p$ is sufficient to protect the link with highest working capacity. Constraint (\ref{eq:10}) also takes care of the case when two links share the same pair of p-cycles, as described in this section earlier and illustrated by Fig. (\ref{fig:Picture6 2}). Constraint (\ref{eq:11}) ensures that spare capacity on each link is sufficient to form the p-cycles.
\begin{figure}
            \includegraphics[width=\linewidth]{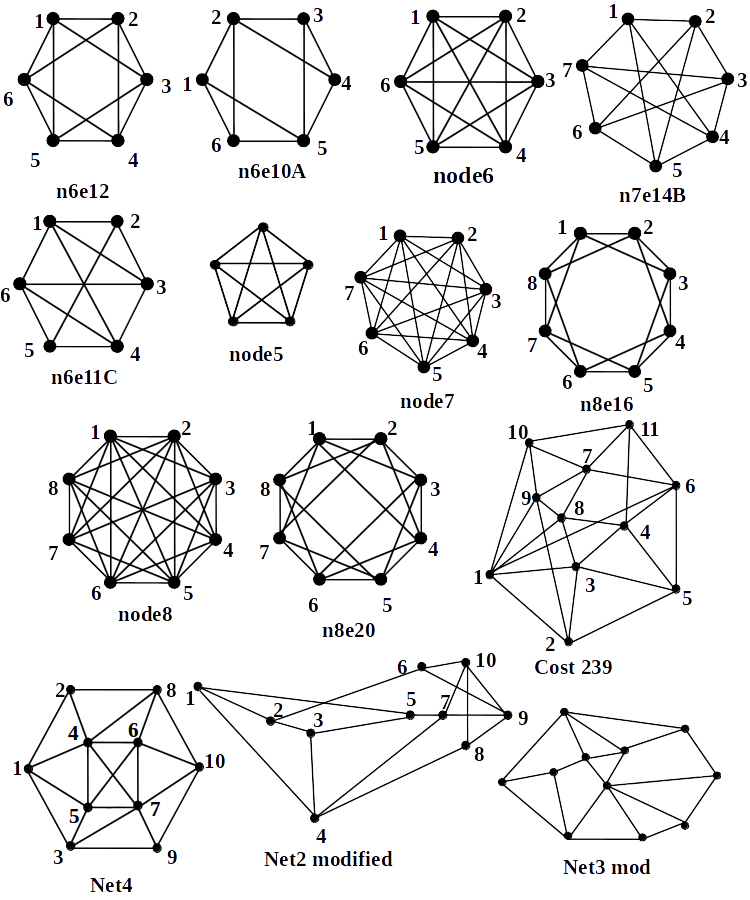}
            \caption{ Network topology used in simulations.}
              \label{fig:Topo3 }
\end{figure} 
   \section{Results and discussion}
ILP formulation is carried out on scilab 5.5, and ILOG CPLEX 9 is used to solve the ILP on AMD Opteron (tm) 1.8 GHz CPU. The number of variables required in the SG method is  $O (L.P)$ where $L$ is the number of links, and $P$ is the number of cycles. In the SG method number of variables and constraints involved in the ILP formulation is  $O(L.P)$. Variables and constraints involved in ILP formulation of DB method is $O(L.P^{2})$. Therefore, the complexity of SG method is lower than the DB method. One can also note that the complexity of the SG method is same as that of the best known single fault protection method. As evident from the simulation results in Table 1, time to solve the ILP is significantly less for the SG method. Spare capacity efficiency (SE) is the ratio of spare capacity required for protection and the working capacity of network, and it is given by Eqn. (\ref{eq:14}).
\begin{equation}\label{eq:14}
	SE = \frac{\sum\limits_{\forall i\in S}s_{i}}{\sum\limits_{\forall i\in S}w_{i}}
\end{equation} 
The SE should be as low as possible. Simulation results shows the smaller value of SE for most of the networks (see table 1) with the SG method as compared to DB method. While for average nodal degree higher than 3.6, SG method always gives better SE.\\
Further, it is observed that the SG method has better restoration speed than DB method. When the first failure occurs, the failed link is restored as a straddling link. In this way, at least half of the copies of p-cycle remain in spare. Restoration path of the first failed link does not require rearrangement if second failure is also on a straddling link. If the second failure occur on the on-cycle link of p-cycle, only half of the traffic requires restoration second time. In case of the DB method restoration of the second failure requires the switching of the traffic from the first p-cycle to a different p-cycle. This restoration will be slower as switching action takes place at nodes which may not be the end nodes of the failed link.

\newcommand{\ra}[1]{\renewcommand{\arraystretch}{#1}}
 \begin{table} \centering
 \ra{1}
 \caption{Simulation Results}
 \begin{tabular}{rrrcrrr}\toprule
 & & & \multicolumn{2}{c}{DB method}  & \multicolumn{2}{r}{Proposed method}\\
  \cmidrule{4-5} \cmidrule{6-7}
  Network & Avg   & Working & SE & ILP time & SE & ILP time  \\
   & degree & capacity &   & (in sec)  &  &  (in sec)  \\
 \midrule 
  n6e10A & 3.3 & 40 & 2.80 & 0 & 2.45 & 0  \\ 
  net2m & 3.4 & 164 & 3.03 & 0.07 & 3.10 & 0.01 \\
  net3mod & 3.4 & 198 & 2.57 & 22.54 & 2.65 & 0.03 \\
  n6e11C & 3.6 & 38 & 2.63 & 0.01 & 2.63 & 0 \\
  n8e16 & 4 & 80 & 1.70 & 115906 & 1.28 & 14.53\\
  n7e14B & 4 & 56 & 1.89 & 18487 & 1.50 & 0.01\\
  n6e12 & 4 & 36 & 1.78 & 32.23 & 1.50 & 0.11\\
  node5 & 4 & 20 & 1.50 & 4.12 & 1.00 & 0\\
  net4 & 4.4 & 142 & \_ & *inf & 1.65 & 0.48\\
  cost239 & 4.7 & 172 & \_ & *inf & 1.06 & 1745.93\\
  node6 & 5 & 30 & 1.00 & 52594 & 0.80 & 2.72\\
  n8e20 & 5 & 72 & \_ & *inf & 1.11 & 0.93\\
  node7 & 6 & 42 & \_ & *inf & 0.67 & 1498.58\\
  node8 & 7 & 56 & \_ & *inf & 0.57 & 195681.18\\
  \bottomrule
\end{tabular}
*inf means solution cold not be achieved by the machine used by us.
\end{table}

\section{Conclusion}
From this study, we can conclude that the spare capacity requirement for deterministic two-link failure protection is less for the SG method as compared to the DB method. SG method is also able to compute the spare capacity of large networks for which DB method fails, because number of variables required for the SG method is $ O(L.P)$ while DB method requires $O(L.P^{2})$ variables. Also, ILP of the SG method is simpler than any other two-link failure protection method. Consequently, the time required in the ILP formulation and ILP solution is also significantly less.
\appendices

\ifCLASSOPTIONcaptionsoff
  \newpage
\fi

\bibliographystyle{IEEEtran}
\bibliography{ref.bib}

\end{document}